\def\ps@pprintTitle{%
 \let\@oddhead\@empty
 \let\@evenhead\@empty
 \def\@oddfoot{}%
 \let\@evenfoot\@oddfoot}
\newtheorem{theorem}{Theorem}
\makeatletter \@addtoreset{equation}{section} \makeatother
\newcommand{\N}{\mathbb{N}}
\newcommand{\EE}{\mathbb{E}}
\newcommand{\bb}[1]{\boldsymbol{#1}}
\newcommand{\leqdef}{\vcentcolon=}
\newcommand{\blue}{\color{blue}}
\newcommand{\ind}{\mathds{1}}
\begin{document}

\begin{frontmatter}

    \title{Explicit formulas for the joint third and fourth central moments\\of the multinomial distribution}%

    \author[a1]{Fr\'ed\'eric Ouimet\texorpdfstring{\corref{cor1}\fnref{fn1}}{)}}%

    \address[a1]{California Institute of Technology, Pasadena, USA.}%

    \cortext[cor1]{Corresponding author}%
    \ead{ouimetfr@caltech.edu}%

    \begin{abstract}
        We give the first explicit formulas for the joint third and fourth central moments of the multinomial distribution, by differentiating the moment generating function. A general formula for the joint factorial moments was previously given in \cite{MR143299}.
    \end{abstract}

    \begin{keyword}
        multinomial distribution \sep simplex \sep central moments \sep third moment \sep fourth moment
        \MSC[2010]{Primary : 62E15 Secondary : 60E05}
    \end{keyword}

\end{frontmatter}

\vspace{-2mm}
\section{The multinomial distribution}\label{sec:intro}

    For any $d\in \N$, the $d$-dimensional (unit) simplex is defined by $\mathcal{S} \leqdef \big\{\bb{x}\in [0,1]^d : \sum_{i=1}^d x_i \leq 1\big\}$, and the probability mass function $\bb{k}\mapsto P_{\bb{k},m}(\bb{x})$ for $\bb{\xi} \leqdef (\xi_1,\xi_2,\dots,\xi_d) \sim \mathrm{Multinomial}\hspace{0.2mm}(m,\bb{x})$ is defined by
    \begin{equation}\label{eq:multinomial.probability}
        P_{\bb{k},m}(\bb{x}) \leqdef \frac{m!}{(m - \sum_{i=1}^d k_i)! \prod_{i=1}^d k_i!} \cdot (1 - \sum_{i=1}^d x_i)^{m - \sum_{i=1}^d k_i} \prod_{i=1}^d x_i^{k_i}, \quad \bb{k}\in \N_0^d \cap m\mathcal{S},
    \end{equation}
    where $m\in \N$ and $\bb{x}\in \mathcal{S}$.
    In this paper, our goal is to compute, for all $i,j,\ell,p\in \{1,2,\dots,d\}$,
    \begin{equation}
        \EE\big[(\xi_i - \EE[\xi_i])(\xi_j - \EE[\xi_j])(\xi_{\ell} - \EE[\xi_{\ell}])\big] \quad \text{and} \quad \EE\big[(\xi_i - \EE[\xi_i])(\xi_j - \EE[\xi_j])(\xi_{\ell} - \EE[\xi_{\ell}])(\xi_p - \EE[\xi_p])\big].
    \end{equation}

\section{Motivation}\label{sec:motivation}

    To the best of our knowledge, explicit formulas for the joint third and fourth central moments of the multinomial distribution have never been derived in the literature.
    These central moments can arise naturally, for example, when studying asymptotic properties, via Taylor expansions, of statistical estimators involving the multinomial distribution.
    For a given sequence of i.i.d.\ observations $\bb{X}_1,\bb{X}_2,\dots,\bb{X}_n$, two examples of such estimators are the Bernstein estimator for the cumulative distribution function
    \begin{equation}\label{eq:cdf.Bernstein.estimator}
        F_{n,m}^{\star}(\bb{x}) \leqdef \sum_{\bb{k}\in \N_0^d \cap m \mathcal{S}} \frac{1}{n} \sum_{i=1}^n \ind_{(-\bb{\infty},\frac{\bb{k}}{m}]}(\bb{X}_i) P_{\bb{k},m}(\bb{x}), \quad \bb{x}\in \mathcal{S}, ~m,n \in \N,
    \end{equation}
    and the Bernstein estimator for the density function (also called smoothed histogram)
    \begin{equation}\label{eq:histogram.estimator}
        \hat{f}_{n,m}(\bb{x}) \leqdef \sum_{\bb{k}\in \N_0^d \cap (m-1) \mathcal{S}} \frac{m^d}{n} \sum_{i=1}^n \ind_{(\frac{\bb{k}}{m}, \frac{\bb{k} + 1}{m}]}(\bb{X}_i) P_{\bb{k},m-1}(\bb{x}), \quad \bb{x}\in \mathcal{S}, ~m,n \in \N,
    \end{equation}
    over the $d$-dimensional simplex.
    Some of their asymptotic properties were investigated by \cite{MR0397977}, \cite{MR0638651}, \cite{MR0726014}, \cite{MR0791719}, \cite{MR0858109}, \cite{MR1437794}, \cite{MR1703623}, \cite{MR1712051}, \cite{MR1873330}, \cite{MR1881846}, \cite{MR1910059}, \cite{MR2068610}, \cite{MR2351744}, \cite{MR2395599}, \cite{MR2488150}, \cite{MR2662607}, \cite{MR2782409}, \cite{MR2960952}, \cite{MR2925964}, \cite{MR3174309}, \cite{MR3139345}, \cite{MR3412755}, \cite{MR3488598}, \cite{MR3740722}, \cite{MR3630225}, \cite{MR3983257} when $d = 1$, by \cite{MR1293514} when $d = 2$, and by \cite{arXiv:2002.07758,arXiv:2006.11756} for all $d\geq 1$, using a local limit theorem from \cite{arXiv:2001.08512} for the multinomial distribution (see also \cite{MR0478288}).
    The estimator \eqref{eq:histogram.estimator} is a discrete analogue of the Dirichlet kernel estimator introduced by \cite{doi:10.2307/2347365} and studied theoretically in \cite{MR1685301}, \cite{MR1718494}, \cite{MR1742101}, \cite{MR1985506} when $d = 1$ (among others), and in \cite{arXiv:2002.06956} for all $d\geq 1$.

\newpage
\section{Results}\label{sec:main.results}

    First, we compute the non-central moments of the multinomial distribution in \eqref{eq:multinomial.probability}.
    These moments were also calculated independently on a case-by-case basis in \cite{Newcomer2008phd,Newcomer_et_al_2008_tech_report}.

    \begin{theorem}[{\blue Non-central moments}]\label{thm:non.central.moments}
        Let $m^{(k)} \leqdef m (m-1) \dots (m-k+1)$ and let $\bb{\xi} \sim \mathrm{Multinomial}\hspace{0.3mm}(m,\bb{x})$.
        Then, for all $i,j,\ell,p\in \{1,2,\dots,d\}$,
        \begin{align}
            \EE[\xi_i]
            &= m \, x_i, \label{eq:lem:non.central.moments.eq.1} \\[2mm]
            \EE[\xi_i \xi_j]
            &= m^{(2)} x_i x_j + \ind_{\{i = j\}} m \, x_i, \label{eq:lem:non.central.moments.eq.2} \\[2mm]
            \EE[\xi_i \xi_j \xi_{\ell}]
            &= m^{(3)} x_i x_j x_{\ell} + \ind_{\{i = j = \ell\}} \, [3 \, m^{(2)} x_i^2 + m \, x_i] \label{eq:lem:non.central.moments.eq.3} \\[-0.5mm]
            &\quad+ \ind_{\{i = j \neq \ell \neq i\}} m^{(2)} x_i x_{\ell} + \ind_{\{i \neq j = \ell \neq i\}} m^{(2)} x_i x_j + \ind_{\{i \neq j \neq \ell = i\}} m^{(2)} x_j x_{\ell}, \notag \\[2mm]
            \EE[\xi_i \xi_j \xi_{\ell} \xi_p]
            &= m^{(4)} x_i x_j x_{\ell} x_p \label{eq:lem:non.central.moments.eq.4} \\[-1mm]
            &\quad+ \ind_{\{i = j \neq \ell \neq i\} \cap \{i \neq p \neq \ell\}} \, m^{(3)} x_i x_{\ell} x_p \notag \\
            &\quad+ \ind_{\{i \neq j = \ell \neq i\} \cap \{i \neq p \neq j\}} \, m^{(3)} x_i x_j x_p \notag \\
            &\quad+ \ind_{\{i \neq j \neq \ell = i\} \cap \{j \neq p \neq \ell\}} \, m^{(3)} x_j x_{\ell} x_p \notag \\[1mm]
            &\quad+ \ind_{\{i = j \neq \ell \neq i\} \cap \{i = p \neq \ell\}} \, [3 \, m^{(3)} x_i^2 x_{\ell} + m^{(2)} x_i x_{\ell}] \notag \\
            &\quad+ \ind_{\{i \neq j = \ell \neq i\} \cap \{i \neq p = j\}} \, [3 \, m^{(3)} x_i x_j^2 + m^{(2)} x_i x_j] \notag \\
            &\quad+ \ind_{\{i \neq j \neq \ell = i\} \cap \{j \neq p = \ell\}} \, [3 \, m^{(3)} x_j x_{\ell}^2 + m^{(2)} x_j x_{\ell}] \notag \\[1mm]
            &\quad+ \ind_{\{p = i\} \cap \{j \neq \ell \neq p \neq j\}} \, m^{(3)} x_i x_j x_{\ell} \notag \\
            &\quad+ \ind_{\{p = j\} \cap \{i \neq \ell \neq p \neq i\}} \, m^{(3)} x_i x_j x_{\ell} \notag \\
            &\quad+ \ind_{\{p = \ell\} \cap \{i \neq j \neq p \neq i\}} \, m^{(3)} x_i x_j x_{\ell} \notag \\[1mm]
            &\quad+ \ind_{\{i = j \neq \ell \neq i\} \cap \{i \neq p = \ell\}} \, [m^{(3)} x_i^2 x_{\ell} + m^{(3)} x_i x_{\ell}^2 + m^{(2)} x_i x_{\ell}] \notag \\
            &\quad+ \ind_{\{i \neq j = \ell \neq i\} \cap \{i = p \neq j\}} \, [m^{(3)} x_i x_j^2 + m^{(3)} x_i^2 x_j + m^{(2)} x_i x_j] \notag \\
            &\quad+ \ind_{\{i \neq j \neq \ell = i\} \cap \{j = p \neq \ell\}} \, [m^{(3)} x_j x_{\ell}^2 + m^{(3)} x_j^2 x_{\ell} + m^{(2)} x_j x_{\ell}] \notag \\[1mm]
            &\quad+ \ind_{\{i = j = \ell \neq p\}} \, [3 \, m^{(3)} x_i^2 x_p + m^{(2)} x_i x_p] \notag \\[1mm]
            &\quad+ \ind_{\{i = j = \ell = p\}} \, [6 \, m^{(3)} x_i^3 + 7 m^{(2)} x_i^2 + m x_i]. \notag
        \end{align}
    \end{theorem}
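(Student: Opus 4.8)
The plan is to follow the route announced in the abstract and differentiate the joint moment generating function directly. Writing $g(\bb{t}) \leqdef 1 - \sum_{i=1}^d x_i + \sum_{i=1}^d x_i e^{t_i}$, the moment generating function of $\bb{\xi} \sim \mathrm{Multinomial}(m,\bb{x})$ factors as $M(\bb{t}) = \EE[\exp(\sum_{i=1}^d t_i \xi_i)] = g(\bb{t})^m$, and each non-central moment is recovered as $\EE[\xi_i \xi_j \cdots] = \partial_{t_i} \partial_{t_j} \cdots\, M\big|_{\bb{t} = \bb{0}}$. The decisive simplification is that $g$ is a constant plus a sum of single-variable functions: every genuinely mixed partial derivative of $g$ therefore vanishes, leaving only $\partial_{t_i}^r g = x_i e^{t_i}$ for $r \geq 1$, which equals $x_i$ at $\bb{t} = \bb{0}$, while $g(\bb{0}) = 1$.

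First I would apply the iterated Leibniz (Fa\`a di Bruno) rule to the power $g^m$. Differentiating $k$ times with respect to the variables attached to the positions $\{1,\dots,k\}$ produces a sum over the set partitions $\pi$ of those positions: each $\pi$ carries a falling-factorial prefactor $m^{(|\pi|)}$ from repeatedly differentiating the exponent, multiplied by a product over the blocks $B \in \pi$ of the corresponding higher derivative of $g$. Evaluating at $\bb{t} = \bb{0}$ collapses each block factor to $\ind_{\{\text{the indices at the positions of } B \text{ all coincide}\}}$ times their common $x$-value, since a mixed derivative of $g$ survives only when all of its differentiation variables are identical. This gives the master formula $\EE[\xi_{a_1}\cdots \xi_{a_k}] = \sum_{\pi} m^{(|\pi|)} \prod_{B \in \pi} \ind_{\{a \text{ constant on } B\}}\, x_{a_{\min B}}$.

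The four displayed identities are then the specializations to $k = 1, 2, 3, 4$. For $k = 1$ and $k = 2$ the result is immediate, the two partitions of a two-element set producing exactly the $m^{(2)} x_i x_j$ and $\ind_{\{i=j\}}\, m\, x_i$ terms; for $k = 3$ one expands the five partitions of $\{1,2,3\}$ and collects terms. All the real work is in the $k = 4$ case.

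The main obstacle is purely combinatorial. One must enumerate the fifteen set partitions of $\{1,2,3,4\}$, attach to each the correct weight $m^{(|\pi|)}$ (with $|\pi| = 4,3,2,1$ for the all-singleton, $2{+}1{+}1$, $2{+}2$ together with $3{+}1$, and full partitions), and then convert the resulting sum of \emph{overlapping} indicators into the \emph{mutually exclusive} equality-pattern indicators displayed in \eqref{eq:lem:non.central.moments.eq.4}. This recombination is the delicate part: several overlapping indicators fire simultaneously under a single coincidence pattern and must be merged, which is precisely how, say, the three partitions refining the block of coincident positions coalesce into the coefficient $3\,m^{(3)} x_i^2 x_{\ell}$ in the pattern $i = j = p \neq \ell$. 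As a global sanity check, the fully collapsed case $i = j = \ell = p$ must reproduce $m^{(4)} x_i^4 + 6\,m^{(3)} x_i^3 + 7\,m^{(2)} x_i^2 + m\,x_i$, whose coefficients $1, 6, 7, 1$ count the partitions of $\{1,2,3,4\}$ into $4, 3, 2, 1$ blocks.
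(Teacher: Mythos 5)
Your proposal is correct, and it reaches the theorem by a genuinely different organization of the same starting point. The paper also differentiates $M_m(\bb{t}) = g(\bb{t})^m$, but it does so by brute force: it writes out the first, second, third and fourth partial derivatives explicitly, carrying at each stage a growing case analysis encoded in indicator functions, and then sets $\bb{t}=\bb{0}$; the whole proof is essentially four long displayed formulas whose correctness is checked by inspection, and it stops at order four. You instead invoke the set-partition (Fa\`a di Bruno) form of the chain rule once, observe that $g$ has no genuinely mixed derivatives, and obtain the closed master formula
\begin{equation}
\EE[\xi_{a_1}\cdots \xi_{a_k}] \;=\; \sum_{\pi}\, m^{(|\pi|)} \prod_{B\in\pi} \ind_{\{a \text{ constant on } B\}}\, x_{a_{\min B}},
\end{equation}
valid for every order $k$, from which the four displayed identities follow by enumerating the $1$, $2$, $5$ and $15$ partitions. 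This buys generality (all joint moments at once, not just $k\le 4$), makes the structure of the coefficients transparent (e.g.\ $1,6,7,1$ in the case $i=j=\ell=p$ are the Stirling numbers counting partitions of a $4$-set by number of blocks, and the $3$ in $3\,m^{(3)}x_i^2 x_\ell$ counts the three $2{+}1{+}1$ partitions compatible with the pattern), and in fact recovers the link with the factorial-moment formula of the reference cited in the abstract; what it costs is reliance on the multivariate Fa\`a di Bruno lemma, plus the same careful bookkeeping the paper does when converting the overlapping partition indicators into the mutually exclusive equality patterns of \eqref{eq:lem:non.central.moments.eq.4} --- a step you correctly identify as the real work and sketch convincingly (with correct spot checks), though you do not write out all fifteen cases.
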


    \begin{proof}
        The moment generating function of $\bb{\xi}$ is
        \begin{equation}
            M_m(\bb{t}) = \bigg(1 - \sum_{i=1}^d x_i + \sum_{i=1}^d x_i \, e^{t_i}\bigg)^m.
        \end{equation}

        \vspace{-3mm}
        \noindent
        We have
        \begin{align}
            \frac{\partial}{\partial t_i} M_m(\bb{t})
            &= m M_{m-1}(\bb{t}) \, x_i \, e^{t_i}, \\[1mm]
            \frac{\partial^2}{\partial t_i \partial t_j} M_m(\bb{t})
            &= m^{(2)} M_{m-2}(\bb{t}) \, x_i \, e^{t_i} \, x_j \, e^{t_j} + \ind_{\{i = j\}} m M_{m-1}(\bb{t}) \, x_i \, e^{t_i}, \\[1mm]
            \frac{\partial^3}{\partial t_i \partial t_j \partial t_{\ell}} M_m(\bb{t})
            &= m^{(3)} M_{m-3}(\bb{t}) \, x_i \, e^{t_i} \, x_j \, e^{t_j} \, x_{\ell} \, e^{t_{\ell}} \\[-3mm]
            &\quad+ \ind_{\{i = j \neq \ell \neq i\}} m^{(2)} M_{m-2}(\bb{t}) \, x_i \, e^{t_i} \, x_{\ell} \, e^{t_{\ell}} \notag \\
            &\quad+ \ind_{\{i \neq j = \ell \neq i\}} m^{(2)} M_{m-2}(\bb{t}) \, x_i \, e^{t_i} \, x_j \, e^{t_j} \notag \\
            &\quad+ \ind_{\{i \neq j \neq \ell = i\}} m^{(2)} M_{m-2}(\bb{t}) \, x_j \, e^{t_j} \, x_{\ell} \, e^{t_{\ell}} \notag \\
            &\quad+ \ind_{\{i = j = \ell\}} \, [3 \, m^{(2)} M_{m-2}(\bb{t}) \, x_i^2 \, e^{2 t_i} + m M_{m-1}(\bb{t}) \, x_i \, e^{t_i}], \notag \\[1mm]
            \frac{\partial^4}{\partial t_i \partial t_j \partial t_{\ell} \partial t_p} M_m(\bb{t})
            &= m^{(4)} M_{m-4}(\bb{t}) \, x_i \, e^{t_i} \, x_j \, e^{t_j} \, x_{\ell} \, e^{t_{\ell}} \, x_p \, e^{t_p} \\[-2mm]
            &\quad+ \ind_{\{i = j \neq \ell \neq i\} \cap \{i \neq p \neq \ell\}} m^{(3)} M_{m-3}(\bb{t}) \, x_i \, e^{t_i} \, x_{\ell} \, e^{t_{\ell}} \, x_p \, e^{t_p} \notag \\
            &\quad+ \ind_{\{i \neq j = \ell \neq i\} \cap \{i \neq p \neq j\}} m^{(3)} M_{m-3}(\bb{t}) \, x_i \, e^{t_i} \, x_j e^{t_j} \, x_p \, e^{t_p} \notag \\
            &\quad+ \ind_{\{i \neq j \neq \ell = i\} \cap \{j \neq p \neq \ell\}} m^{(3)} M_{m-3}(\bb{t}) \, x_j \, e^{t_j} \, x_{\ell} \, e^{t_{\ell}} \, x_p \, e^{t_p} \notag \\[1mm]
            &\quad+ \ind_{\{i = j \neq \ell \neq i\} \cap \{i = p \neq \ell\}} \, \left[\hspace{-1mm}
                \begin{array}{l}
                    3 \, m^{(3)} M_{m-3}(\bb{t}) \, x_i^2 \, e^{2 t_i} \, x_{\ell} \, e^{t_{\ell}} \\
                    + m^{(2)} M_{m-2}(\bb{t}) \, x_i \, e^{t_i} \, x_{\ell} \, e^{t_{\ell}}
                \end{array}
                \hspace{-1mm}\right] \notag \\
            &\quad+ \ind_{\{i \neq j = \ell \neq i\} \cap \{i \neq p = j\}} \, \left[\hspace{-1mm}
                \begin{array}{l}
                    3 \, m^{(3)} M_{m-3}(\bb{t}) \, x_i \, e^{t_i} \, x_j^2 \, e^{2 t_j} \\
                    + m^{(2)} M_{m-2}(\bb{t}) \, x_i e^{t_i} \, x_j \, e^{t_j}
                \end{array}
                \hspace{-1mm}\right] \notag \\
            &\quad+ \ind_{\{i \neq j \neq \ell = i\} \cap \{j \neq p = \ell\}} \, \left[\hspace{-1mm}
                \begin{array}{l}
                    3 \, m^{(3)} M_{m-3}(\bb{t}) x_j \, e^{t_j} \, x_{\ell}^2 \, e^{2 t_{\ell}} \\
                    + m^{(2)} M_{m-2}(\bb{t}) \, x_j \, e^{t_j} \, x_{\ell} \, e^{t_{\ell}}
                \end{array}
                \hspace{-1mm}\right] \notag \\[1mm]
            &\quad+ \ind_{\{p = i\} \cap \{j \neq \ell \neq p \neq j\}} m^{(3)} M_{m-3}(\bb{t}) \, x_i \, e^{t_i} \, x_j \, e^{t_j} \, x_{\ell} \, e^{t_{\ell}} \notag \\
            &\quad+ \ind_{\{p = j\} \cap \{i \neq \ell \neq p \neq i\}} m^{(3)} M_{m-3}(\bb{t}) \, x_i \, e^{t_i} \, x_j \, e^{t_j} \, x_{\ell} \, e^{t_{\ell}} \notag \\
            &\quad+ \ind_{\{p = \ell\} \cap \{i \neq j \neq p \neq i\}} m^{(3)} M_{m-3}(\bb{t}) \, x_i \, e^{t_i} \, x_j \, e^{t_j} \, x_{\ell} \, e^{t_{\ell}} \notag \\[1mm]
            &\quad+ \ind_{\{i = j \neq \ell \neq i\} \cap \{i \neq p = \ell\}} \, \left[\hspace{-1mm}
                \begin{array}{l}
                    m^{(3)} M_{m-3}(\bb{t}) \, x_i^2 \, e^{2 t_i} \, x_{\ell} \, e^{t_{\ell}} \\
                    + m^{(3)} M_{m-3}(\bb{t}) \, x_i \, e^{t_i} \, x_{\ell}^2 \, e^{2 t_{\ell}} \\
                    + m^{(2)} M_{m-2}(\bb{t}) \, x_i \, e^{t_i} \, x_{\ell} \, e^{t_{\ell}}
                \end{array}
                \hspace{-1.5mm}\right] \notag \\
            &\quad+ \ind_{\{i \neq j = \ell \neq i\} \cap \{i = p \neq j\}} \, \left[\hspace{-1mm}
                \begin{array}{l}
                    m^{(3)} M_{m-3}(\bb{t}) \, x_i \, e^{t_i} \, x_j^2 \, e^{2 t_j} \\
                    + m^{(3)} M_{m-3}(\bb{t}) \, x_i^2 \, e^{2 t_i} \, x_j \, e^{t_j} \\
                    + m^{(2)} M_{m-2}(\bb{t}) \, x_i \, e^{t_i} \, x_j \, e^{t_j}
                \end{array}
                \hspace{-1.5mm}\right] \notag \\
            &\quad+ \ind_{\{i \neq j \neq \ell = i\} \cap \{j = p \neq \ell\}} \, \left[\hspace{-1mm}
                \begin{array}{l}
                    m^{(3)} M_{m-3}(\bb{t}) \, x_j \, e^{t_j} \, x_{\ell}^2 \, e^{2 t_{\ell}} \\
                    + m^{(3)} M_{m-3}(\bb{t}) \, x_j^2 \, e^{2 t_j} \, x_{\ell} \, e^{t_{\ell}} \\
                    + m^{(2)} M_{m-2}(\bb{t}) \, x_j \, e^{t_j} \, x_{\ell} \, e^{t_{\ell}}
                \end{array}
                \hspace{-1.5mm}\right] \notag \\[1mm]
            &\quad+ \ind_{\{i = j = \ell \neq p\}} \, \left[\hspace{-1mm}
                \begin{array}{l}
                    3 \, m^{(3)} M_{m-3}(\bb{t}) \, x_i^2 \, e^{2 t_i} \, x_p \, e^{t_p} \\
                    + m^{(2)} M_{m-2}(\bb{t}) \, x_i \, e^{t_i} \, x_p \, e^{t_p}
                \end{array}
                \hspace{-1mm}\right] \notag \\[1mm]
            &\quad+ \ind_{\{i = j = \ell = p\}} \, \left[\hspace{-1mm}
                \begin{array}{l}
                    6 m^{(3)} M_{m-3}(\bb{t}) \, x_i^3 \, e^{3 t_i} \\
                    + 7 m^{(2)} M_{m-2}(\bb{t}) \, x_i^2 \, e^{2 t_i} \\
                    + m M_{m-1}(\bb{t}) \, x_i \, e^{t_i}
                \end{array}
                \hspace{-1.5mm}\right]. \notag
        \end{align}
        By taking $\bb{t} = \bb{0}$, we get the conclusion.
    \end{proof}

    With some algebraic manipulations and a careful analysis, we can now obtain the central moments.

    \begin{theorem}[{\blue Central moments}]\label{thm:central.moments}
        Let $\bb{\xi} \sim \mathrm{Multinomial}\hspace{0.3mm}(m,\bb{x})$, then, for all $i,j,\ell,p\in \{1,2,\dots,d\}$,
        \begin{align}
            &\EE\big[\xi_i - \EE[\xi_i]\big] = 0, \label{eq:thm:central.moments.eq.1} \\[2mm]
            &\EE\big[(\xi_i - \EE[\xi_i])(\xi_j - \EE[\xi_j])\big] = m \, (x_i \ind_{\{i = j\}} - x_i x_j), \label{eq:thm:central.moments.eq.2} \\[2mm]
            &\EE\big[(\xi_i - \EE[\xi_i])(\xi_j - \EE[\xi_j])(\xi_{\ell} - \EE[\xi_{\ell}])\big] \notag \\[1mm]
            &\hspace{5mm}= m \, \big(2 x_i x_j x_{\ell} - \ind_{\{i = j\}} x_i x_{\ell} - \ind_{\{j = \ell\}} x_i x_j - \ind_{\{i = \ell\}} x_j x_{\ell} + \ind_{\{i = j = \ell\}} x_i\big), \label{eq:thm:central.moments.eq.3} \\[2mm]
            &\EE\big[(\xi_i - \EE[\xi_i])(\xi_j - \EE[\xi_j])(\xi_{\ell} - \EE[\xi_{\ell}])(\xi_p - \EE[\xi_p])\big] \notag \\[1mm]
            &\hspace{5mm}= (3 m^2 - 6 m) \, x_i x_j x_{\ell} x_p - (12 m^2 - 12 m) \, \ind_{\{i = j = \ell = p\}} x_i^3 \label{eq:thm:central.moments.eq.4} \\[0.5mm]
            &\hspace{5mm}\quad+ m^2 \left\{\hspace{-1mm}
                \begin{array}{l}
                    \ind_{\{i = j\}} x_i x_{\ell} x_p + \ind_{\{i = \ell\}} x_j x_{\ell} x_p + \ind_{\{i = p\}} x_i x_j x_{\ell} \\
                    \ind_{\{j = \ell\}} x_i x_j x_p + \ind_{\{j = p\}} x_i x_j x_{\ell} + \ind_{\{\ell = p\}} x_i x_j x_{\ell}
                \end{array}
                \hspace{-1mm}\right\} \notag \\[0.5mm]
            &\hspace{5mm}\quad+ (3 m^2 - 7 m) \, \ind_{\{i = j = \ell = p\}} x_i^2 + m \, \ind_{\{i = j = \ell = p\}} x_i \notag \\[0.5mm]
            &\hspace{5mm}\quad- (6 m^2 - 6 m) \left\{\hspace{-1mm}
                \begin{array}{l}
                    \ind_{\{i = j = \ell \neq p\}} x_i^2 x_p + \ind_{\{i = j = p \neq \ell\}} x_i^2 x_{\ell} \\
                    + \ind_{\{i = \ell = p \neq j\}} x_j x_{\ell}^2 + \ind_{\{j = \ell = p \neq i\}} x_i x_j^2
                \end{array}
                \hspace{-1mm}\right\} \notag \\[0.5mm]
            &\hspace{5mm}\quad- (2 m^2 - 2 m) \left\{\hspace{-1mm}
                \begin{array}{l}
                    \ind_{\{\ell \neq i = j \neq p\}} x_i x_{\ell} x_p + \ind_{\{j \neq i = \ell \neq p\}} x_j x_{\ell} x_p + \ind_{\{j \neq i = p \neq \ell\}} x_i x_j x_{\ell} \\
                    \ind_{\{i \neq j = \ell \neq p\}} x_i x_j x_p + \ind_{\{i \neq j = p \neq \ell\}} x_i x_j x_{\ell} + \ind_{\{i \neq \ell = p \neq j\}} x_i x_j x_{\ell}
                \end{array}
                \hspace{-1mm}\right\} \notag \\[0.5mm]
            &\hspace{5mm}\quad+ (m^2 - m) \, \big\{\ind_{\{i = j \neq \ell = p\}} x_i x_{\ell} + \ind_{\{i = p \neq j = \ell\}} x_i x_j + \ind_{\{i = \ell \neq j = p\}} x_j x_{\ell}\big\} \notag \\[0.5mm]
            &\hspace{5mm}\quad- m \, \big\{\ind_{\{i = j = \ell \neq p\}} x_i x_p + \ind_{\{i = j = p \neq \ell\}} x_i x_{\ell} + \ind_{\{i = \ell = p \neq j\}} x_j x_{\ell} + \ind_{\{j = \ell = p \neq i\}} x_i x_j\big\}. \notag
        \end{align}
    \end{theorem}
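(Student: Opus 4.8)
The plan is to write each central moment as a linear combination of the non-central moments from Theorem~\ref{thm:non.central.moments} together with $\EE[\xi_s] = m\, x_s$, and then to simplify by sorting the resulting terms according to the pattern of coincidences among the indices.

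First I would record the purely algebraic raw-to-central conversion formulas, which follow from expanding the products and using linearity of expectation. For the third moment,
\begin{align}
&\EE\big[(\xi_i - \EE[\xi_i])(\xi_j - \EE[\xi_j])(\xi_{\ell} - \EE[\xi_{\ell}])\big] \notag \\
&\quad= \EE[\xi_i \xi_j \xi_{\ell}] - \EE[\xi_{\ell}]\, \EE[\xi_i \xi_j] - \EE[\xi_j]\, \EE[\xi_i \xi_{\ell}] - \EE[\xi_i]\, \EE[\xi_j \xi_{\ell}] + 2\, \EE[\xi_i] \EE[\xi_j] \EE[\xi_{\ell}], \notag
\end{align}
and for the fourth moment, writing $S$ for a subset of the four index slots,
\begin{equation}
\EE\Big[\textstyle\prod_s (\xi_s - \EE[\xi_s])\Big] = \sum_{S \subseteq \{i,j,\ell,p\}} (-1)^{4 - |S|}\, \EE\big[\textstyle\prod_{s \in S}\xi_s\big] \prod_{s \notin S} \EE[\xi_s]. \notag
\end{equation}
Collapsing the terms with $|S| \leq 1$ using $\EE[\xi_s] = m\, x_s$, the fourth-moment identity becomes the raw fourth moment, minus the four terms of the form $\EE[\xi_s]\, \EE[\text{triple}]$, plus the six terms of the form $\EE[\xi_c] \EE[\xi_d]\, \EE[\text{pair}]$, minus $3\, \EE[\xi_i] \EE[\xi_j] \EE[\xi_{\ell}] \EE[\xi_p]$.

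Next I would substitute the explicit expressions from Theorem~\ref{thm:non.central.moments} and $\EE[\xi_s] = m\, x_s$ into these identities, and then carry out the bookkeeping by splitting into cases according to the set partition of $\{i, j, \ell, p\}$ induced by the equalities among the indices. There are $5$ such partitions of $\{i, j, \ell\}$ for the third moment (all distinct; one of three possible pairs; all equal) and $15$ partitions of $\{i, j, \ell, p\}$ for the fourth moment (Bell number $B_4 = 15$: all distinct; one pair, $6$ cases; two pairs, $3$ cases; one triple, $4$ cases; all equal). Within each case the monomials in $x_1, \dots, x_d$ are fixed, so the only task is to collect the coefficient of each monomial as a polynomial in $m$, using $m^{(2)} = m^2 - m$, $m^{(3)} = m^3 - 3m^2 + 2m$, and $m^{(4)} = m^4 - 6m^3 + 11m^2 - 6m$. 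For instance, in the all-distinct case of the fourth moment the coefficient of $x_i x_j x_{\ell} x_p$ is $m^{(4)} - 4\, m\, m^{(3)} + 6\, m^2 m^{(2)} - 3\, m^4$, whose $m^4$ and $m^3$ terms cancel to leave $3 m^2 - 6m$, in agreement with \eqref{eq:thm:central.moments.eq.4}; the analogous computation for the third moment gives $m^{(3)} - 3\, m\, m^{(2)} + 2\, m^3 = 2m$, matching \eqref{eq:thm:central.moments.eq.3}.

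The main obstacle is purely organizational: the raw fourth moment in Theorem~\ref{thm:non.central.moments} already carries a large number of indicator terms, and each factor such as $\EE[\xi_s]\, \EE[\xi_a \xi_b \xi_c]$ contributes its own indicator terms, so the danger is miscounting or mismatching these contributions when regrouping. Organizing everything by the induced set partition of the indices is what makes the computation tractable and transparent, because it pins down in advance exactly which indicator in Theorem~\ref{thm:non.central.moments} is active in each case and isolates the high-degree-in-$m$ cancellations (the vanishing of the $m^4$ and $m^3$, respectively $m^3$ and $m^2$, coefficients) that produce the comparatively simple final expressions.
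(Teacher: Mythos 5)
Your proposal is correct and follows essentially the same route as the paper: both expand the central moments into linear combinations of the raw moments from Theorem~\ref{thm:non.central.moments} (your conversion identities, including the collapsed $-3\,\EE[\xi_i]\EE[\xi_j]\EE[\xi_{\ell}]\EE[\xi_p]$ term, agree with the paper's), substitute the explicit formulas, and collect coefficients, with the key simplification being the cancellation of the $m^4$ and $m^3$ (resp.\ $m^3$) terms. Your organization of the bookkeeping by the induced set partitions of the indices is a presentational refinement of the paper's direct indicator-by-indicator cancellation, not a genuinely different method.
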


    \begin{proof}
        The expression \eqref{eq:thm:central.moments.eq.2} for the covariance follows directly from \eqref{eq:lem:non.central.moments.eq.1} and \eqref{eq:lem:non.central.moments.eq.2}.
        For the third central moments, \eqref{eq:lem:non.central.moments.eq.1}, \eqref{eq:lem:non.central.moments.eq.2} and \eqref{eq:lem:non.central.moments.eq.3} yield
        \begin{align*}
            &\EE\big[(\xi_i - \EE[\xi_i])(\xi_j - \EE[\xi_j])(\xi_{\ell} - \EE[\xi_{\ell}])\big] \\[1mm]
            &\hspace{5mm}= \EE[\xi_i \xi_j \xi_{\ell}] - m \, \EE[\xi_i \xi_j] \, x_{\ell} - m \, \EE[\xi_i \xi_{\ell}] \, x_j - m \, \EE[\xi_j \xi_{\ell}] \, x_i \\[0.5mm]
            &\hspace{5mm}\quad+ m^2 \EE[\xi_i] \, x_j x_{\ell} + m^2 \EE[\xi_j] \, x_i x_{\ell} + m^2 \EE[\xi_{\ell}] \, x_i x_j - m^3 x_i x_j x_{\ell} \\[0.5mm]
            &\hspace{5mm}= \left\{\hspace{-1mm}
                \begin{array}{l}
                    m^{(3)} \, x_i x_j x_{\ell} + \ind_{\{i = j = \ell\}} \, [3 \, m^{(2)} x_i^2 + m \, x_i] \\[0.5mm]
                    + \ind_{\{i = j \neq \ell \neq i\}} m^{(2)} x_i x_{\ell} + \ind_{\{i \neq j = \ell \neq i\}} m^{(2)} x_i x_j + \ind_{\{i \neq j \neq \ell = i\}} m^{(2)} x_j x_{\ell}
                \end{array}
                \hspace{-1mm}\right\} \\[0.5mm]
            &\hspace{5mm}\quad- m \, \big\{m^{(2)} x_i x_j + \ind_{\{i = j\}} m \, x_i\big\} x_{\ell} - m \, \big\{m^{(2)} x_i x_{\ell} + \ind_{\{i = {\ell}\}} m \, x_i\big\} \, x_j \\[0.5mm]
            &\hspace{5mm}\quad- m \, \big\{m^{(2)} x_j x_{\ell} + \ind_{\{j = \ell\}} m \, x_j\big\} x_i + 2 m^3 x_i x_j x_{\ell} \\[0.5mm]
            &\hspace{5mm}= 2 m \, x_i x_j x_{\ell} - m \, \ind_{\{i = j\}} x_i x_{\ell} - m \, \ind_{\{j = \ell\}} x_i x_j - m \, \ind_{\{i = \ell\}} x_j x_{\ell} + m \, \ind_{\{i = j = \ell\}} x_i \\[0.5mm]
            &\hspace{5mm}= m \, \big(2 x_i x_j x_{\ell} - \ind_{\{i = j\}} x_i x_{\ell} - \ind_{\{j = \ell\}} x_i x_j - \ind_{\{i = \ell\}} x_j x_{\ell} + \ind_{\{i = j = \ell\}} x_i\big).
        \end{align*}
        Similarly, by \eqref{eq:lem:non.central.moments.eq.1}, \eqref{eq:lem:non.central.moments.eq.2}, \eqref{eq:lem:non.central.moments.eq.3} and \eqref{eq:lem:non.central.moments.eq.4}, we have
        \begin{align*}
            &\EE\big[(\xi_i - \EE[\xi_i])(\xi_j - \EE[\xi_j])(\xi_{\ell} - \EE[\xi_{\ell}])(\xi_p - \EE[\xi_p])\big] \\[1mm]
            &\hspace{5mm}= \EE[\xi_i \xi_j \xi_{\ell} \xi_p] - m \, \EE[\xi_i \xi_j \xi_{\ell}] \, x_p - m \, \EE[\xi_i \xi_j \xi_p] \, x_{\ell} - m \, \EE[\xi_i \xi_{\ell} \xi_p] \, x_j - m \, \EE[\xi_j \xi_{\ell} \xi_p] \, x_i \\[0.75mm]
            &\hspace{5mm}\quad+ m^2 \EE[\xi_i \xi_j] \, x_{\ell} x_p + m^2 \EE[\xi_i \xi_{\ell}] \, x_j x_p + m^2 \EE[\xi_i \xi_p] \, x_j x_{\ell} \\[0.75mm]
            &\hspace{5mm}\quad+ m^2 \EE[\xi_j \xi_{\ell}] \, x_i x_p + m^2 \EE[\xi_j \xi_p] \, x_i x_{\ell} + m^2 \EE[\xi_{\ell} \xi_p] \, x_i x_j \\[0.75mm]
            &\hspace{5mm}\quad- m^3 \EE[\xi_i] \, x_i x_j x_{\ell} - m^3 \EE[\xi_j] \, x_i x_{\ell} x_p - m^3 \EE[\xi_{\ell}] \, x_i x_j x_p - m^3 \EE[\xi_p] \, x_i x_j x_{\ell} + m^4 x_i x_j x_{\ell} x_p \\[1mm]
            &\hspace{5mm}= \left\{\hspace{-1mm}
                \begin{array}{l}
                    m^{(4)} x_i x_j x_{\ell} x_p \notag \\
                    + \ind_{\{i = j \neq \ell \neq i\} \cap \{i \neq p \neq \ell\}} \, m^{(3)} x_i x_{\ell} x_p \notag \\
                    + \ind_{\{i \neq j = \ell \neq i\} \cap \{i \neq p \neq j\}} \, m^{(3)} x_i x_j x_p \notag \\
                    + \ind_{\{i \neq j \neq \ell = i\} \cap \{j \neq p \neq \ell\}} \, m^{(3)} x_j x_{\ell} x_p \notag \\[1.5mm]
                    + \ind_{\{i = j \neq \ell \neq i\} \cap \{i = p \neq \ell\}} \, [3 \, m^{(3)} x_i^2 x_{\ell} + m^{(2)} x_i x_{\ell}] \notag \\
                    + \ind_{\{i \neq j = \ell \neq i\} \cap \{i \neq p = j\}} \, [3 \, m^{(3)} x_i x_j^2 + m^{(2)} x_i x_j] \notag \\
                    + \ind_{\{i \neq j \neq \ell = i\} \cap \{j \neq p = \ell\}} \, [3 \, m^{(3)} x_j x_{\ell}^2 + m^{(2)} x_j x_{\ell}] \notag \\[1.5mm]
                    + \ind_{\{p = i\} \cap \{j \neq \ell \neq p \neq j\}} \, m^{(3)} x_i x_j x_{\ell} \notag \\
                    + \ind_{\{p = j\} \cap \{i \neq \ell \neq p \neq i\}} \, m^{(3)} x_i x_j x_{\ell} \notag \\
                    + \ind_{\{p = \ell\} \cap \{i \neq j \neq p \neq i\}} \, m^{(3)} x_i x_j x_{\ell} \notag \\[1.5mm]
                    + \ind_{\{i = j \neq \ell \neq i\} \cap \{i \neq p = \ell\}} \, [m^{(3)} x_i^2 x_{\ell} + m^{(3)} x_i x_{\ell}^2 + m^{(2)} x_i x_{\ell}] \notag \\
                    + \ind_{\{i \neq j = \ell \neq i\} \cap \{i = p \neq j\}} \, [m^{(3)} x_i x_j^2 + m^{(3)} x_i^2 x_j + m^{(2)} x_i x_j] \notag \\
                    + \ind_{\{i \neq j \neq \ell = i\} \cap \{j = p \neq \ell\}} \, [m^{(3)} x_j x_{\ell}^2 + m^{(3)} x_j^2 x_{\ell} + m^{(2)} x_j x_{\ell}] \notag \\[1.5mm]
                    + \ind_{\{i = j = \ell \neq p\}} \, [3 \, m^{(3)} x_i^2 x_p + m^{(2)} x_i x_p] \notag \\[1.5mm]
                    + \ind_{\{i = j = \ell = p\}} \, [6 m^{(3)} x_i^3 + 7 m^{(2)} x_i^2 + m x_i]
                \end{array}
                \hspace{-1mm}\right\} \\[1mm]
            &\hspace{5mm}\quad- m \left\{\hspace{-1mm}
                \begin{array}{l}
                    m^{(3)} \, x_i x_j x_{\ell} + \ind_{\{i = j = \ell\}} \, [3 \, m^{(2)} x_i^2 + m \, x_i] \\[0.5mm]
                    + \ind_{\{i = j \neq \ell \neq i\}} m^{(2)} x_i x_{\ell} + \ind_{\{i \neq j = \ell \neq i\}} m^{(2)} x_i x_j + \ind_{\{i \neq j \neq \ell = i\}} m^{(2)} x_j x_{\ell}
                \end{array}
                \hspace{-1mm}\right\} x_p \\[1mm]
            &\hspace{5mm}\quad- m \left\{\hspace{-1mm}
                \begin{array}{l}
                    m^{(3)} \, x_i x_j x_p + \ind_{\{i = j = p\}} \, [3 \, m^{(2)} x_i^2 + m \, x_i] \\[0.5mm]
                    + \ind_{\{i = j \neq p \neq i\}} m^{(2)} x_i x_p + \ind_{\{i \neq j = p \neq i\}} m^{(2)} x_i x_j + \ind_{\{i \neq j \neq p = i\}} m^{(2)} x_j x_p
                \end{array}
                \hspace{-1mm}\right\} x_{\ell} \\[1mm]
            &\hspace{5mm}\quad- m \left\{\hspace{-1mm}
                \begin{array}{l}
                    m^{(3)} \, x_i x_{\ell} x_p + \ind_{\{i = \ell = p\}} \, [3 \, m^{(2)} x_i^2 + m \, x_i] \\[0.5mm]
                    + \ind_{\{i = \ell \neq p \neq i\}} m^{(2)} x_i x_p + \ind_{\{i \neq \ell = p \neq i\}} m^{(2)} x_i x_{\ell} + \ind_{\{i \neq \ell \neq p = i\}} m^{(2)} x_{\ell} x_p
                \end{array}
                \hspace{-1mm}\right\} x_j \\[1mm]
            &\hspace{5mm}\quad- m \left\{\hspace{-1mm}
                \begin{array}{l}
                    m^{(3)} \, x_j x_{\ell} x_p + \ind_{\{j = \ell = p\}} \, [3 \, m^{(2)} x_j^2 + m \, x_j] \\[0.5mm]
                    + \ind_{\{j = \ell \neq p \neq j\}} m^{(2)} x_j x_p + \ind_{\{j \neq \ell = p \neq j\}} m^{(2)} x_j x_{\ell} + \ind_{\{j \neq \ell \neq p = j\}} m^{(2)} x_{\ell} x_p
                \end{array}
                \hspace{-1mm}\right\} x_i \\[1mm]
            &\hspace{5mm}\quad+ m^2 \big\{m^{(2)} x_i x_j + \ind_{\{i = j\}} m \, x_i\big\}  x_{\ell} x_p + m^2 \big\{m^{(2)} x_i x_{\ell} + \ind_{\{i = \ell\}} m \, x_i\big\} x_j x_p \\[1mm]
            &\hspace{5mm}\quad+ m^2 \big\{m^{(2)} x_i x_p + \ind_{\{i = p\}} m \, x_i\big\} x_j x_{\ell} + m^2 \big\{m^{(2)} x_j x_{\ell} + \ind_{\{j = \ell\}} m \, x_j\big\} x_i x_p \\[1mm]
            &\hspace{5mm}\quad+ m^2 \big\{m^{(2)} x_j x_p + \ind_{\{j = p\}} m \, x_j\big\} x_i x_{\ell} + m^2 \big\{m^{(2)} x_{\ell} x_p + \ind_{\{\ell = p\}} m \, x_{\ell}\big\} x_i x_j - 3 m^4 x_i x_j x_{\ell} x_p.
        \end{align*}
        We notice that all the terms with powers $m^3$ and $m^4$ cancel out with each other.
        The above is thus
        \begin{align*}
            &= \left\{\hspace{-1mm}
                \begin{array}{l}
                    (11 m^2 - 6 m) \, x_i x_j x_{\ell} x_p \notag \\[1mm]
                    + (- 3 m^2 + 2 m) \, \ind_{\{i = j \neq \ell \neq i\} \cap \{i \neq p \neq \ell\}} x_i x_{\ell} x_p \notag \\
                    + (- 3 m^2 + 2 m) \, \ind_{\{i \neq j = \ell \neq i\} \cap \{i \neq p \neq j\}} x_i x_j x_p \notag \\
                    + (- 3 m^2 + 2 m) \, \ind_{\{i \neq j \neq \ell = i\} \cap \{j \neq p \neq \ell\}} x_j x_{\ell} x_p \notag \\[1.5mm]
                    + (- 9 m^2 + 6 m) \, \ind_{\{i = j \neq \ell \neq i\} \cap \{i = p \neq \ell\}} \, x_i^2 x_{\ell} \notag \\
                    + (- 9 m^2 + 6 m) \, \ind_{\{i \neq j = \ell \neq i\} \cap \{i \neq p = j\}} \, x_i x_j^2 \notag \\
                    + (- 9 m^2 + 6 m) \, \ind_{\{i \neq j \neq \ell = i\} \cap \{j \neq p = \ell\}} \, x_j x_{\ell}^2 \notag \\[1.5mm]
                    + (1 m^2 - 1 m) \, \ind_{\{i = j \neq \ell \neq i\} \cap \{i = p \neq \ell\}} \, x_i x_{\ell} \notag \\
                    + (1 m^2 - 1 m) \, \ind_{\{i \neq j = \ell \neq i\} \cap \{i \neq p = j\}} \, x_i x_j \notag \\
                    + (1 m^2 - 1 m) \, \ind_{\{i \neq j \neq \ell = i\} \cap \{j \neq p = \ell\}} \, x_j x_{\ell} \notag \\[1.5mm]
                    + (- 3 m^2 + 2 m) \, \ind_{\{p = i\} \cap \{j \neq \ell \neq p \neq j\}} x_i x_j x_{\ell} \notag \\
                    + (- 3 m^2 + 2 m) \, \ind_{\{p = j\} \cap \{i \neq \ell \neq p \neq i\}} x_i x_j x_{\ell} \notag \\
                    + (- 3 m^2 + 2 m) \, \ind_{\{p = \ell\} \cap \{i \neq j \neq p \neq i\}} x_i x_j x_{\ell} \notag \\[1.5mm]
                    + (- 3 m^2 + 2 m) \, \ind_{\{i = j \neq \ell \neq i\} \cap \{i \neq p = \ell\}} \, x_i^2 x_{\ell} \notag \\
                    + (- 3 m^2 + 2 m) \, \ind_{\{i \neq j = \ell \neq i\} \cap \{i = p \neq j\}} \, x_i x_j^2 \notag \\
                    + (- 3 m^2 + 2 m) \, \ind_{\{i \neq j \neq \ell = i\} \cap \{j = p \neq \ell\}} \, x_j x_{\ell}^2 \notag \\[1.5mm]
                    + (- 3 m^2 + 2 m) \, \ind_{\{i = j \neq \ell \neq i\} \cap \{i \neq p = \ell\}} \, x_i x_{\ell}^2 \notag \\
                    + (- 3 m^2 + 2 m) \, \ind_{\{i \neq j = \ell \neq i\} \cap \{i = p \neq j\}} \, x_i^2 x_j \notag \\
                    + (- 3 m^2 + 2 m) \, \ind_{\{i \neq j \neq \ell = i\} \cap \{j = p \neq \ell\}} \, x_j^2 x_{\ell} \notag \\[1.5mm]
                    + (1 m^2 - 1 m) \, \ind_{\{i = j \neq \ell \neq i\} \cap \{i \neq p = \ell\}} \, x_i x_{\ell} \notag \\
                    + (1 m^2 - 1 m) \, \ind_{\{i \neq j = \ell \neq i\} \cap \{i = p \neq j\}} \, x_i x_j \notag \\
                    + (1 m^2 - 1 m) \, \ind_{\{i \neq j \neq \ell = i\} \cap \{j = p \neq \ell\}} \, x_j x_{\ell} \notag \\[1.5mm]
                    + (- 9 m^2 + 6 m) \, \ind_{\{i = j = \ell \neq p\}} \, x_i^2 x_p \notag \\[1.5mm]
                    + (1 m^2 - 1 m) \, \ind_{\{i = j = \ell \neq p\}} \, x_i x_p \notag \\[1.5mm]
                    + (- 18 m^2 + 12 m) \, \ind_{\{i = j = \ell = p\}} \, x_i^3 \notag \\[1.5mm]
                    + (7 m^2 - 7 m) \, \ind_{\{i = j = \ell = p\}} \, x_i^2 \notag \\[1mm]
                    + 1 m \, \ind_{\{i = j = \ell = p\}} x_i
                \end{array}
                \hspace{-1mm}\right\} \\[1mm]
            &\quad+ \left\{\hspace{-1mm}
                \begin{array}{l}
                    - 2 m^2 \, x_i x_j x_{\ell} x_p + 3 m^2 \, \ind_{\{i = j = \ell\}} \, x_i^2 x_p - 1 m^2 \, \ind_{\{i = j = \ell\}} \, x_i x_p \\[0.5mm]
                    + 1 m^2 \, \ind_{\{i = j \neq \ell \neq i\}} x_i x_{\ell} x_p + 1 m^2 \, \ind_{\{i \neq j = \ell \neq i\}} x_i x_j x_p + 1 m^2 \, \ind_{\{i \neq j \neq \ell = i\}} x_j x_{\ell} x_p
                \end{array}
                \hspace{-1mm}\right\} \\[1mm]
            &\quad+ \left\{\hspace{-1mm}
                \begin{array}{l}
                    - 2 m^2 \, x_i x_j x_{\ell} x_p + 3 m^2 \, \ind_{\{i = j = p\}} \, x_i^2 x_{\ell} - 1 m^2 \, \ind_{\{i = j = p\}} \, x_i x_{\ell} \\[0.5mm]
                    + 1 m^2 \, \ind_{\{i = j \neq p \neq i\}} x_i x_{\ell} x_p + 1 m^2 \, \ind_{\{i \neq j = p \neq i\}} x_i x_j x_{\ell} + 1 m^2 \, \ind_{\{i \neq j \neq p = i\}} x_j x_{\ell} x_p
                \end{array}
                \hspace{-1mm}\right\} \\[1mm]
            &\quad+ \left\{\hspace{-1mm}
                \begin{array}{l}
                    - 2 m^2 \, x_i x_j x_{\ell} x_p + 3 m^2 \, \ind_{\{i = \ell = p\}} \, x_i^2 x_j - 1 m^2 \, \ind_{\{i = \ell = p\}} \, x_i x_j \\[0.5mm]
                    + 1 m^2 \, \ind_{\{i = \ell \neq p \neq i\}} x_i x_j x_p + 1 m^2 \, \ind_{\{i \neq \ell = p \neq i\}} x_i x_j x_{\ell} + 1 m^2 \, \ind_{\{i \neq \ell \neq p = i\}} x_j x_{\ell} x_p
                \end{array}
                \hspace{-1mm}\right\} \\[1mm]
            &\quad+ \left\{\hspace{-1mm}
                \begin{array}{l}
                    - 2 m^2 \, x_i x_j x_{\ell} x_p + 3 m^2 \, \ind_{\{j = \ell = p\}} \, x_i x_j^2 - 1 m^2 \, \ind_{\{j = \ell = p\}} \, x_i x_j \\[0.5mm]
                    + 1 m^2 \, \ind_{\{j = \ell \neq p \neq j\}} x_i x_j x_p + 1 m^2 \, \ind_{\{j \neq \ell = p \neq j\}} x_i x_j x_{\ell} + 1 m^2 \, \ind_{\{j \neq \ell \neq p = j\}} x_i x_{\ell} x_p
                \end{array}
                \hspace{-1mm}\right\} \\[2mm]
            &= (3 m^2 - 6 m) \, x_i x_j x_{\ell} x_p - (12 m^2 - 12 m) \, \ind_{\{i = j = \ell = p\}} x_i^3  \\
            &\quad+ m^2 \left\{\hspace{-1mm}
                \begin{array}{l}
                    \ind_{\{i = j\}} x_i x_{\ell} x_p + \ind_{\{i = \ell\}} x_j x_{\ell} x_p + \ind_{\{i = p\}} x_i x_j x_{\ell} \\
                    \ind_{\{j = \ell\}} x_i x_j x_p + \ind_{\{j = p\}} x_i x_j x_{\ell} + \ind_{\{\ell = p\}} x_i x_j x_{\ell}
                \end{array}
                \hspace{-1mm}\right\} \\[0.5mm]
            &\quad+ (3 m^2 - 7 m) \, \ind_{\{i = j = \ell = p\}} x_i^2 + m \, \ind_{\{i = j = \ell = p\}} x_i \\[0.5mm]
            &\quad- (6 m^2 - 6 m) \left\{\hspace{-1mm}
                \begin{array}{l}
                    \ind_{\{i = j = \ell \neq p\}} x_i^2 x_p + \ind_{\{i = j = p \neq \ell\}} x_i^2 x_{\ell} \\
                    + \ind_{\{i = \ell = p \neq j\}} x_j x_{\ell}^2 + \ind_{\{j = \ell = p \neq i\}} x_i x_j^2
                \end{array}
                \hspace{-1mm}\right\} \notag \\[0.5mm]
            &\quad- (2 m^2 - 2 m) \left\{\hspace{-1mm}
                \begin{array}{l}
                    \ind_{\{\ell \neq i = j \neq p\}} x_i x_{\ell} x_p + \ind_{\{j \neq i = \ell \neq p\}} x_j x_{\ell} x_p + \ind_{\{j \neq i = p \neq \ell\}} x_i x_j x_{\ell} \\
                    \ind_{\{i \neq j = \ell \neq p\}} x_i x_j x_p + \ind_{\{i \neq j = p \neq \ell\}} x_i x_j x_{\ell} + \ind_{\{i \neq \ell = p \neq j\}} x_i x_j x_{\ell}
                \end{array}
                \hspace{-1mm}\right\} \\[0.5mm]
            &\quad+ (m^2 - m) \, \big\{\ind_{\{i = j \neq \ell = p\}} x_i x_{\ell} + \ind_{\{i = p \neq j = \ell\}} x_i x_j + \ind_{\{i = \ell \neq j = p\}} x_j x_{\ell}\big\} \\[0.5mm]
            &\quad- m \, \big\{\ind_{\{i = j = \ell \neq p\}} x_i x_p + \ind_{\{i = j = p \neq \ell\}} x_i x_{\ell} + \ind_{\{i = \ell = p \neq j\}} x_j x_{\ell} + \ind_{\{j = \ell = p \neq i\}} x_i x_j\big\}.
        \end{align*}
        This ends the proof.
    \end{proof}

\section{Conclusion}

    In this short paper, we found explicit expressions for the third and fourth central and non-central moments of the multinomial distribution.
    The joint factorial moments were previously calculated in the literature (by \cite{MR143299}), but not the third and fourth central moments.

\section*{Abbreviations}

    i.i.d.\ : independent and identically distributed

\section*{Declarations}

    \subsection{Availability of data and material}

        Not applicable.

    \subsection{Funding}

        The author is supported by a postdoctoral fellowship from the NSERC (PDF) and a supplement from the FRQNT (B3X).

    \subsection{Competing interests}

        The author declares no conflict of interest.

    \subsection{Author's contributions}

        All contributions were made by the sole author of the article, Fr\'ed\'eric Ouimet.

    \subsection{Acknowledgments}

        Not applicable.

%
%

\nocite{MR3825458}
\nocite{Ouimet2019phd}

\bibliographystyle{authordate1}
\bibliography{Ouimet_2020_third_fourth_moments_multinomial_bib}

\begin{thebibliography}{}

\bibitem[\protect\citename{Aitchison \& Lauder, }1985]{doi:10.2307/2347365}
Aitchison, J., \& Lauder, I.~J. 1985.
\newblock Kernel Density Estimation for Compositional Data.
\newblock {\em Journal of the Royal Statistical Society. Series C (Applied
  Statistics)}, {\bf 34}(2), 129--137.
\newblock \href{https://doi.org/10.2307/2347365}{doi:10.2307/2347365}.

\bibitem[\protect\citename{Arenbaev, }1976]{MR0478288}
Arenbaev, N.~K. 1976.
\newblock Asymptotic behavior of the multinomial distribution.
\newblock {\em Teor. Veroyatnost. i Primenen.}, {\bf 21}(4), 826--831.
\newblock \href{http://www.ams.org/mathscinet-getitem?mr=MR0478288}{MR0478288}.

\bibitem[\protect\citename{Babu {\em et~al.}, }2002]{MR1910059}
Babu, G.~J., Canty, A.~J., \& Chaubey, Y.~P. 2002.
\newblock Application of {B}ernstein polynomials for smooth estimation of a
  distribution and density function.
\newblock {\em J. Statist. Plann. Inference}, {\bf 105}(2), 377--392.
\newblock \href{http://www.ams.org/mathscinet-getitem?mr=MR1910059}{MR1910059}.

\bibitem[\protect\citename{Belalia {\em et~al.}, }2017]{MR3630225}
Belalia, M., Bouezmarni, T., \& Leblanc, A. 2017.
\newblock Smooth conditional distribution estimators using {B}ernstein
  polynomials.
\newblock {\em Comput. Statist. Data Anal.}, {\bf 111}, 166--182.
\newblock \href{http://www.ams.org/mathscinet-getitem?mr=MR3630225}{MR3630225}.

\bibitem[\protect\citename{Belalia {\em et~al.}, }2019]{MR3983257}
Belalia, M., Bouezmarni, T., \& Leblanc, A. 2019.
\newblock Bernstein conditional density estimation with application to
  conditional distribution and regression functions.
\newblock {\em J. Korean Statist. Soc.}, {\bf 48}(3), 356--383.
\newblock \href{http://www.ams.org/mathscinet-getitem?mr=MR3983257}{MR3983257}.

\bibitem[\protect\citename{Bouezmarni \& Rolin, }2003]{MR1985506}
Bouezmarni, T., \& Rolin, J.-M. 2003.
\newblock Consistency of the beta kernel density function estimator.
\newblock {\em Canad. J. Statist.}, {\bf 31}(1), 89--98.
\newblock \href{http://www.ams.org/mathscinet-getitem?mr=MR1985506}{MR1985506}.

\bibitem[\protect\citename{Bouezmarni \& Rolin, }2007]{MR2351744}
Bouezmarni, T., \& Rolin, J.-M. 2007.
\newblock Bernstein estimator for unbounded density function.
\newblock {\em J. Nonparametr. Stat.}, {\bf 19}(3), 145--161.
\newblock \href{http://www.ams.org/mathscinet-getitem?mr=MR2351744}{MR2351744}.

\bibitem[\protect\citename{Bouezmarni {\em et~al.}, }2007]{MR2395599}
Bouezmarni, T., M., Mesfioui, \& Rolin, J.~M. 2007.
\newblock $L_1$-rate of convergence of smoothed histogram.
\newblock {\em Statist. Probab. Lett.}, {\bf 77}(14), 1497--1504.
\newblock \href{http://www.ams.org/mathscinet-getitem?mr=MR2395599}{MR2395599}.

\bibitem[\protect\citename{Brown \& Chen, }1999]{MR1685301}
Brown, B.~M., \& Chen, S.~X. 1999.
\newblock Beta-{B}ernstein smoothing for regression curves with compact
  support.
\newblock {\em Scand. J. Statist.}, {\bf 26}(1), 47--59.
\newblock \href{http://www.ams.org/mathscinet-getitem?mr=MR1685301}{MR1685301}.

\bibitem[\protect\citename{Chen, }1999]{MR1718494}
Chen, S.~X. 1999.
\newblock Beta kernel estimators for density functions.
\newblock {\em Comput. Statist. Data Anal.}, {\bf 31}(2), 131--145.
\newblock \href{http://www.ams.org/mathscinet-getitem?mr=MR1718494}{MR1718494}.

\bibitem[\protect\citename{Chen, }2000]{MR1742101}
Chen, S.~X. 2000.
\newblock Beta kernel smoothers for regression curves.
\newblock {\em Statist. Sinica}, {\bf 10}(1), 73--91.
\newblock \href{http://www.ams.org/mathscinet-getitem?mr=MR1742101}{MR1742101}.

\bibitem[\protect\citename{Curtis \& Ghosh, }2011]{MR2782409}
Curtis, S.~M., \& Ghosh, S.~K. 2011.
\newblock A variable selection approach to monotonic regression with
  {B}ernstein polynomials.
\newblock {\em J. Appl. Stat.}, {\bf 38}(5), 961--976.
\newblock \href{http://www.ams.org/mathscinet-getitem?mr=MR2782409}{MR2782409}.

\bibitem[\protect\citename{Gawronski, }1985]{MR0791719}
Gawronski, W. 1985.
\newblock Strong laws for density estimators of {B}ernstein type.
\newblock {\em Period. Math. Hungar}, {\bf 16}(1), 23--43.
\newblock \href{http://www.ams.org/mathscinet-getitem?mr=MR0791719}{MR0791719}.

\bibitem[\protect\citename{Gawronski \& Stadtm\"uller, }1981]{MR0638651}
Gawronski, W., \& Stadtm\"uller, U. 1981.
\newblock Smoothing histograms by means of lattice and continuous
  distributions.
\newblock {\em Metrika}, {\bf 28}(3), 155--164.
\newblock \href{http://www.ams.org/mathscinet-getitem?mr=MR0638651}{MR0638651}.

\bibitem[\protect\citename{Ghosal, }2001]{MR1873330}
Ghosal, S. 2001.
\newblock Convergence rates for density estimation with {B}ernstein
  polynomials.
\newblock {\em Ann. Statist.}, {\bf 29}(5), 1264--1280.
\newblock \href{http://www.ams.org/mathscinet-getitem?mr=MR1873330}{MR1873330}.

\bibitem[\protect\citename{Guan, }2016]{MR3488598}
Guan, Z. 2016.
\newblock Efficient and robust density estimation using {B}ernstein type
  polynomials.
\newblock {\em J. Nonparametr. Stat.}, {\bf 28}(2), 250--271.
\newblock \href{http://www.ams.org/mathscinet-getitem?mr=MR3488598}{MR3488598}.

\bibitem[\protect\citename{Guan, }2017]{MR3740722}
Guan, Z. 2017.
\newblock Bernstein polynomial model for grouped continuous data.
\newblock {\em J. Nonparametr. Stat.}, {\bf 29}(4), 831--848.
\newblock \href{http://www.ams.org/mathscinet-getitem?mr=MR3740722}{MR3740722}.

\bibitem[\protect\citename{Igarashi \& Kakizawa, }2014]{MR3174309}
Igarashi, G., \& Kakizawa, Y. 2014.
\newblock On improving convergence rate of {B}ernstein polynomial density
  estimator.
\newblock {\em J. Nonparametr. Stat.}, {\bf 26}(1), 61--84.
\newblock \href{http://www.ams.org/mathscinet-getitem?mr=MR3174309}{MR3174309}.

\bibitem[\protect\citename{Kakizawa, }2004]{MR2068610}
Kakizawa, Y. 2004.
\newblock Bernstein polynomial probability density estimation.
\newblock {\em J. Nonparametr. Stat.}, {\bf 16}(5), 709--729.
\newblock \href{http://www.ams.org/mathscinet-getitem?mr=MR2068610}{MR2068610}.

\bibitem[\protect\citename{Leblanc, }2009]{MR2488150}
Leblanc, A. 2009.
\newblock Chung-{S}mirnov property for {B}ernstein estimators of distribution
  functions.
\newblock {\em J. Nonparametr. Stat.}, {\bf 21}(2), 133--142.
\newblock \href{http://www.ams.org/mathscinet-getitem?mr=MR2488150}{MR2488150}.

\bibitem[\protect\citename{Leblanc, }2010]{MR2662607}
Leblanc, A. 2010.
\newblock A bias-reduced approach to density estimation using {B}ernstein
  polynomials.
\newblock {\em J. Nonparametr. Stat.}, {\bf 22}(3-4), 459--475.
\newblock \href{http://www.ams.org/mathscinet-getitem?mr=MR2662607}{MR2662607}.

\bibitem[\protect\citename{Leblanc, }2012a]{MR2960952}
Leblanc, A. 2012a.
\newblock On estimating distribution functions using {B}ernstein polynomials.
\newblock {\em Ann. Inst. Statist. Math.}, {\bf 64}(5), 919--943.
\newblock \href{http://www.ams.org/mathscinet-getitem?mr=MR2960952}{MR2960952}.

\bibitem[\protect\citename{Leblanc, }2012b]{MR2925964}
Leblanc, A. 2012b.
\newblock On the boundary properties of {B}ernstein polynomial estimators of
  density and distribution functions.
\newblock {\em J. Statist. Plann. Inference}, {\bf 142}(10), 2762--2778.
\newblock \href{http://www.ams.org/mathscinet-getitem?mr=MR2925964}{MR2925964}.

\bibitem[\protect\citename{Lu, }2015]{MR3412755}
Lu, L. 2015.
\newblock On the uniform consistency of the {B}ernstein density estimator.
\newblock {\em Statist. Probab. Lett.}, {\bf 107}, 52--61.
\newblock \href{http://www.ams.org/mathscinet-getitem?mr=MR3412755}{MR3412755}.

\bibitem[\protect\citename{Mosimann, }1962]{MR143299}
Mosimann, J.~E. 1962.
\newblock On the compound multinomial distribution, the multivariate {$\beta
  $}-distribution, and correlations among proportions.
\newblock {\em Biometrika}, {\bf 49}, 65--82.
\newblock \href{http://www.ams.org/mathscinet-getitem?mr=MR143299}{MR143299}.

\bibitem[\protect\citename{Newcomer, }2008]{Newcomer2008phd}
Newcomer, J.~T. 2008.
\newblock {\em Estimation procedures for multinomial models with
  overdispersion}.
\newblock Ph{D} thesis, University of Maryland.

\bibitem[\protect\citename{Newcomer {\em et~al.},
  }2008]{Newcomer_et_al_2008_tech_report}
Newcomer, J.~T., Neerchal, N.~K., \& Morel, J.~G. 2008.
\newblock Computation of higher order moments from two multinomial
  overdispersion likelihood models.
\newblock {\em Preprint},  1--11.
\newblock
  \url{http://www.math.umbc.edu/~kogan/technical_papers/2008/Newcomer_Nagaraj_Morel.pdf}.

\bibitem[\protect\citename{Ouimet, }2018]{MR3825458}
Ouimet, F. 2018.
\newblock Complete monotonicity of multinomial probabilities and its
  application to {B}ernstein estimators on the simplex.
\newblock {\em J. Math. Anal. Appl.}, {\bf 466}(2), 1609--1617.
\newblock \href{http://www.ams.org/mathscinet-getitem?mr=MR3825458}{MR3825458}.

\bibitem[\protect\citename{Ouimet, }2019]{Ouimet2019phd}
Ouimet, F. 2019.
\newblock {\em Extremes of log-correlated random fields and the {R}iemann-zeta
  function, and some asymptotic results for various estimators in statistics}.
\newblock Ph{D} thesis, Universit\'e de Montr\'eal.
\newblock \url{http://hdl.handle.net/1866/22667}.

\bibitem[\protect\citename{Ouimet, }2020a]{arXiv:2002.07758}
Ouimet, F. 2020a.
\newblock Asymptotic properties of {B}ernstein estimators on the simplex.
\newblock {\em Preprint},  1--27.
\newblock \href{https://arxiv.org/abs/2002.07758}{arXiv:2002.07758}.

\bibitem[\protect\citename{Ouimet, }2020b]{arXiv:2006.11756}
Ouimet, F. 2020b.
\newblock Asymptotic properties of {B}ernstein estimators on the simplex. Part
  2: the boundary case.
\newblock {\em Preprint},  1--23.
\newblock \href{https://arxiv.org/abs/2006.11756}{arXiv:2006.11756}.

\bibitem[\protect\citename{Ouimet, }2020c]{arXiv:2002.06956}
Ouimet, F. 2020c.
\newblock Density estimation using Dirichlet kernels.
\newblock {\em Preprint},  1--39.
\newblock \href{https://arxiv.org/abs/2002.06956}{arXiv:2002.06956}.

\bibitem[\protect\citename{Ouimet, }2020d]{arXiv:2001.08512}
Ouimet, F. 2020d.
\newblock A precise local limit theorem for the multinomial distribution and
  some applications.
\newblock {\em Preprint},  1--20.
\newblock \href{https://arxiv.org/abs/2001.08512}{arXiv:2001.08512}.

\bibitem[\protect\citename{Petrone, }1999a]{MR1703623}
Petrone, S. 1999a.
\newblock Bayesian density estimation using {B}ernstein polynomials.
\newblock {\em Canad. J. Statist.}, {\bf 27}(1), 105--126.
\newblock \href{http://www.ams.org/mathscinet-getitem?mr=MR1703623}{MR1703623}.

\bibitem[\protect\citename{Petrone, }1999b]{MR1712051}
Petrone, S. 1999b.
\newblock Random {B}ernstein polynomials.
\newblock {\em Scand. J. Statist.}, {\bf 26}(3), 373--393.
\newblock \href{http://www.ams.org/mathscinet-getitem?mr=MR1712051}{MR1712051}.

\bibitem[\protect\citename{Petrone \& Wasserman, }2002]{MR1881846}
Petrone, S., \& Wasserman, L. 2002.
\newblock Consistency of {B}ernstein polynomial posteriors.
\newblock {\em J. Roy. Statist. Soc. Ser. B}, {\bf 64}(1), 79--100.
\newblock \href{http://www.ams.org/mathscinet-getitem?mr=MR1881846}{MR1881846}.

\bibitem[\protect\citename{Stadtm\"uller, }1983]{MR0726014}
Stadtm\"uller, U. 1983.
\newblock Asymptotic distributions of smoothed histograms.
\newblock {\em Metrika}, {\bf 30}(3), 145--158.
\newblock \href{http://www.ams.org/mathscinet-getitem?mr=MR0726014}{MR0726014}.

\bibitem[\protect\citename{Stadtm\"uller, }1986]{MR0858109}
Stadtm\"uller, U. 1986.
\newblock Asymptotic properties of nonparametric curve estimates.
\newblock {\em Period. Math. Hungar.}, {\bf 17}(2), 83--108.
\newblock \href{http://www.ams.org/mathscinet-getitem?mr=MR0858109}{MR0858109}.

\bibitem[\protect\citename{Tenbusch, }1994]{MR1293514}
Tenbusch, A. 1994.
\newblock Two-dimensional {B}ernstein polynomial density estimators.
\newblock {\em Metrika}, {\bf 41}(3-4), 233--253.
\newblock \href{http://www.ams.org/mathscinet-getitem?mr=MR1293514}{MR1293514}.

\bibitem[\protect\citename{Tenbusch, }1997]{MR1437794}
Tenbusch, A. 1997.
\newblock Nonparametric curve estimation with {B}ernstein estimates.
\newblock {\em Metrika}, {\bf 45}(1), 1--30.
\newblock \href{http://www.ams.org/mathscinet-getitem?mr=MR1437794}{MR1437794}.

\bibitem[\protect\citename{Turnbull \& Ghosh, }2014]{MR3139345}
Turnbull, B.~C., \& Ghosh, S.~K. 2014.
\newblock Unimodal density estimation using {B}ernstein polynomials.
\newblock {\em Comput. Statist. Data Anal.}, {\bf 72}, 13--29.
\newblock \href{http://www.ams.org/mathscinet-getitem?mr=MR3139345}{MR3139345}.

\bibitem[\protect\citename{Vitale, }1975]{MR0397977}
Vitale, R.~A. 1975.
\newblock Bernstein polynomial approach to density function estimation.
\newblock {\em Pages  87--99 of:} {\em Statistical {I}nference and {R}elated
  {T}opics}.
\newblock Academic Press, New York.
\newblock \href{http://www.ams.org/mathscinet-getitem?mr=MR0397977}{MR0397977}.

\end{thebibliography}

\end{document}